\documentclass[conference]{IEEEtran}
\ifCLASSINFOpdf

\else

\fi
\usepackage{mymath}
\usepackage{amsmath}
\usepackage{balance}
\usepackage{graphicx}

\setlength\unitlength{1mm}

\long\def\comment#1{}


\newfont{\bbb}{msbm10 scaled 700}

\newfont{\bb}{msbm10 scaled 1100}

\newcommand{\vect}[1]{\lowercase{\mathbf{#1}}}

\newcommand{\random}[1]{\uppercase{#1}}

\newcommand{\randomvect}[1]{\uppercase{#1}}


\newcommand{\xv}{\vect{x}}
\newcommand{\yv}{\vect{y}}



\newcommand{\Oc}{\alphab{O}} 

\newcommand{\Sc}{\alphab{S}} 

\newcommand{\Xc}{\alphab{X}} 
\newcommand{\Yc}{\alphab{Y}} 




\newcommand{\sr}{\random{s}}

\newcommand{\xr}{\random{x}}
\newcommand{\yr}{\random{y}}



\newcommand{\xrv}{\randomvect{x}}
\newcommand{\yrv}{\randomvect{y}}





%
%

\newcommand\ilr{\ensuremath{L_\beta}}
\newcommand\filr{\ilr\group{\xrv^n,P_\beta}}
\newcommand\filrop{\ilr\group{\xrv^n,P^{*}_\beta}}
\newcommand\filrox{\ilr\group{\randomvect{\hat{X}}^n,P_\beta}}

\newcommand\feasibleset{\ensuremath{\alphab{y}^n_\beta\group{s_0,\xv}}}

\newcommand\wlen{{(\beta+1)}/{\alpha}}


\hyphenation{op-tical net-works semi-conduc-tor}

\newtheorem{theorem}{Theorem}
\newtheorem{definition}{Definition}

\begin{document}
%
\title{Smart Meter Privacy via the Trapdoor Channel}


\author{\IEEEauthorblockN{Miguel Arrieta$^*$ and I\~naki Esnaola$^{*\S}$}
\IEEEauthorblockA{$^*$Dept. of Automatic  Control and Systems Engineering, University of Sheffield, Sheffield S1 3JD, UK\\
 $^\S$Dept. of Electrical Engineering, Princeton University, Princeton, NJ 08544, USA\\
 \{marrieta2, esnaola\}@sheffield.ac.uk
}}


%


\maketitle

\begin{abstract}
A battery charging policy that provides privacy guarantees for smart meter systems with finite capacity battery is proposed. For this policy an upper bound on the information leakage rate is provided. The upper bound applies for  general random processes modelling the energy consumption of the user.
It is shown that the average energy consumption of the user determines the information leakage rate to the utility provider. The upper bound is shown to be tight by deriving the probability law of a random process achieving the bound.

\end{abstract}


%
\IEEEpeerreviewmaketitle

\section{Introduction}

The increasing appeal for  economical and environmentally-friendly energy calls for more efficient energy generation, distribution, and consumption \cite{ipakchi2009grid}. The introduction of a digital infrastructure  into the traditional power system takes steps towards this vision by providing a cyber layer that elevates the existing power system to a cyberphysical system. This advanced sensing and communication infrastructure envisioned by the smart grid enables high resolution and real time management of the processes within the grid. This application layer  also enables dynamic energy pricing, shifting user demand to match energy generation \cite{Hartway1999895}. Moreover, the introduction of energy consumption indicators through Smart Meters (SMs) are reported to reduce the energy consumption of the user by up to 15\% by raising awareness of the energy cost \cite{Wood2003821}.

While the high-resolution information provided by the smart grid brings clear advantages it also raises privacy and security concerns \cite{quinn2009privacy}, \cite{mcdaniel2009security,rouf2012neighborhood}. By analysing the consumption profile of a user, techniques such as non-intrusive load monitoring (NILM) \cite{hart1992nonintrusive} track and recognise appliance usage patterns \cite{McKenna2012807,Molina-Markham:2010:PMS:1878431.1878446}. Human presence, usage of individual appliances \cite{Enev:2011:TVP:2046707.2046770,985144}, and even tuned TV channel \cite{greveler2012multimedia} are among the list of recognizable elements \cite{4266955}. This privacy breach hinders the implementation of some of the essential components of the smart grid \cite{quinn2009privacy}, \cite{mcdaniel2009security}. Within this paradigm, SMs are central components to the dilemma posed by the need for accurate monitoring while providing privacy. In 2009 two bills law aimed to enforce the usage of SMs were blocked by the Senate of the Netherlands motivated by the privacy concerns that emerge as a result of the increased penetration of SMs \cite{Cuijpers2013}.

There is a growing body of literature addressing the conflict between efficient energy monitoring and privacy brought forth by the introduction of SMs. In \cite{6007070,6102315} obfuscation of the knowledge that the utility provider (UP) has about the energy consumption of the user is studied. Indeed, in the case in which the SM readings are the only source of information available to the UP, obfuscation yields some degree of privacy. Obfuscation is achieved by several  mechanisms, such as aggregating the consumption of multiple users \cite{6007070}, compression of the energy consumption sequences \cite{6102315} or homomorphic encryption \cite{Li2010} among others. A different approach to the problem arises in the setting in which users have access to alternative energy sources or energy storage devices \cite{zhu2017privacy}. In this case, the UP has perfect knowledge of the energy provided to the user, but the user employs the energy storing capability of the system to dissociate the energy consumed by the appliances from the energy provided by the UP. In \cite{gomez2013privacy, tan2012smart, li2015privacy} the case in which the user is assumed to have an alternative energy source with instantaneous power constraints is studied. In \cite{tan2012smart,varodayan2011smart, kalogridis2010privacy, 6003811} the user is assumed to have access to a rechargeable battery and the energy consumption is modelled as an independent and identically distributed (i.i.d.)  random process.

In a practical setting, the energy consumption profiles of users exhibit non-stationary statistical structures and are not well described by memoryless random processes \cite{kalogridis2010privacy}.  Moreover, information-theoretic privacy measures for random processes that are not i.i.d. are still not well understood \cite{6620443}. The privacy utility tradeoff is characterized for stationary Gaussian energy consumption models in \cite{65eac443f7d6420a9bb100e3a77b70a6}. A first-order time-homogeneous Markov chain is considered in \cite{7536745}. In  this work we adopt a non-probabilistic framework by modelling the energy management system (EMS) with a finite capacity battery as a finite state channel without probabilistic structure. Inspired by the code construction in \cite{ahlswede1987optimal} we propose an energy charging policy and we characterize the privacy guarantees of the strategy for general random processes. We also particularize the analysis to the case in which the average energy consumption of the user is known. For this case we provide an upper bound to the amount of information that the user leaks to the UP and show that the average energy consumption governs the privacy that is achievable by the user. 
In this paper vectors are denoted by bold font, e.g. $\xv$, random variables are denoted by upper-case, e.g.  $\xr$, and vectors of $n$ random variables are denoted by super-indexing the size, e.g. $X^n$.

\section{Battery System Model}
\label{sec:sys_model}

We consider the energy management system (EMS) operating in discrete-time illustrated in Fig. 1.
The energy consumption of the user is modelled as a discrete-time  random process $\xr_1,\xr_2, \ldots$ where $\xr_i$ is a random variable taking values in $\Xc = \{0,1, \ldots,\alpha\}$. In this setting $\xr_i$ describes the energy consumed at time instant $i\in\mathbb{N}$.
At each time instant, the energy consumption of the user is satisfied by requesting energy from the UP or by discharging the battery. This decision is taken by the energy management unit (EMU) at each time instant based on some power consumption performance and privacy criteria. 
The energy requested from the UP is also a discrete-time  random process $\yr_1,\yr_2,...$ where $\yr_i$ is a random variable taking values in $\Yc = \{0,1,...,\gamma\}$ and describing the energy requested from the UP at time instant $i\in\mathbb{N}$. When $\yr_i > \xr_i$, the excess energy is stored in the battery. Alternatively, when $\yr_i \leq \xr_i$, the energy deficit is obtained from the battery. %
We assume the UP is able to satisfy the energy consumption of the user even in the case when there is no battery, \ie\ $\gamma \geq \alpha$.

\begin{figure}[t]
  \centering
  \includegraphics[width=\columnwidth]{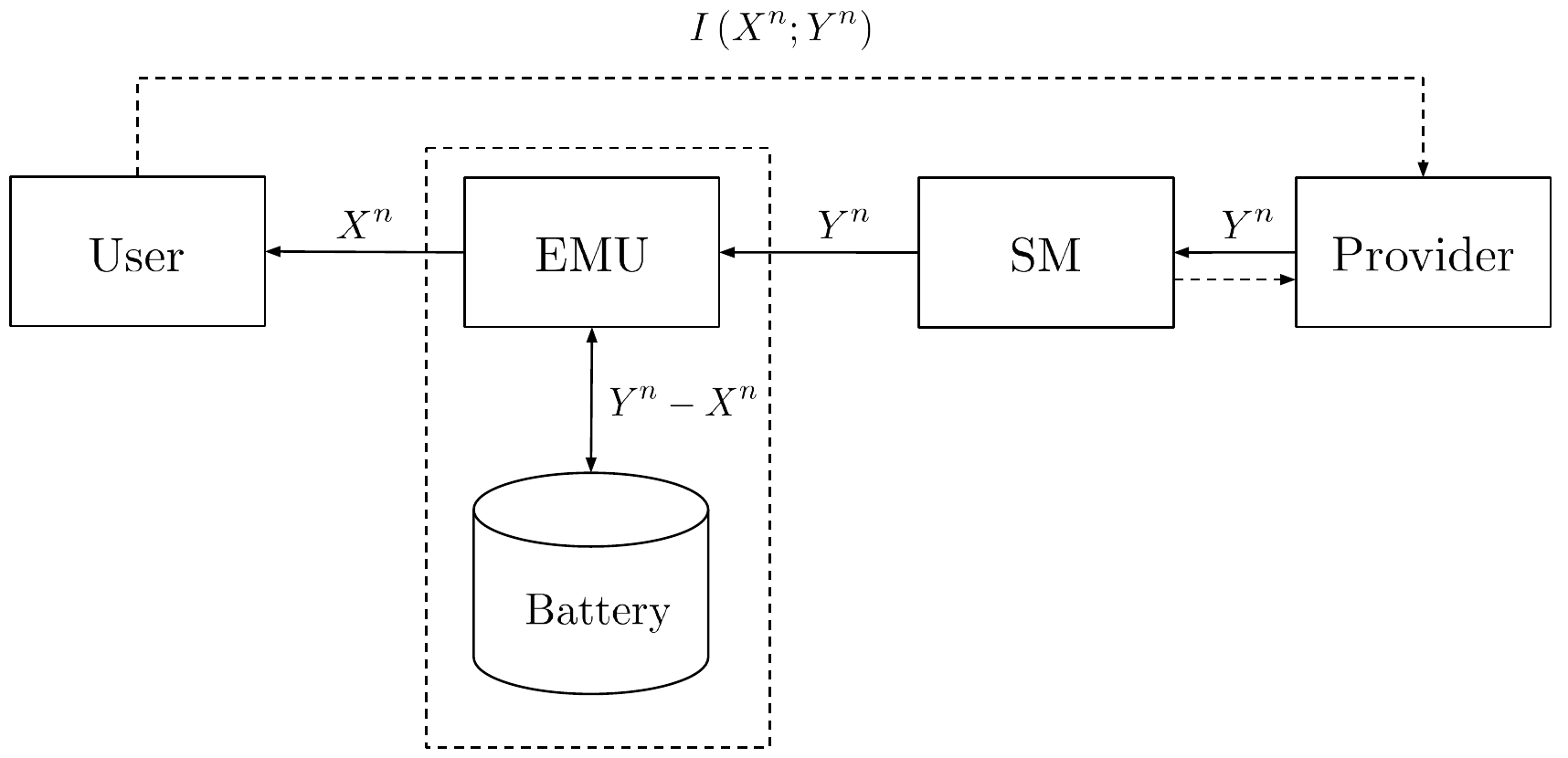}
  \label{fig:battery_system}
  \caption{Diagram of an Energy Management System with finite battery.}
\end{figure}

In the above model, the state $\sr_i$ taking values in $\alphab{s} = \{0,1,...,\beta\}$ describes the energy stored in the battery at time $i$. The energy stored in the battery $\sr_i$ is  a function of the previous energy consumption $\xrv^i$, the energy request $\yrv^i$, and the initial state $s_0 \in \alphab{S}$ given by
\begin{equation} \label{eq:battery_filing}
  \sr_i = s_0+ \sum_{k=0}^{i-1} \group{ \yr_k - \xr_k }.
\end{equation}
Within this setting, a power outage occurs  when $\sr_i + { \yr_i - \xr_i } < 0$, and energy is wasted when $\sr_i + { \yr_i - \xr_i } > \beta$. In the following we focus on EMUs that do not allow power outages nor energy wasting but provide a quantifiable privacy guarantee for the user. Given a particular realization $\xv\in\Xc^n$ of the random process $\xr_1, \xr_2, \ldots, \xr_n$ modelling the energy consumption of the user up to time $n$, the set of energy requests that the EMU can implement is limited by the power outage and the energy waste constraints. The following definition describes the set of energy requests that the EMU can implement.

\begin{definition}
Given an energy consumption sequence $\xv \in \Xc^n$ as the input of an EMU with a battery of capacity $\beta$, the {\it set of stable energy request sequences} that avoid power outages and energy waste is given by
\begin{equation}
\label{eq:stable_battery_outputs}
  \feasibleset \defeq \{ \yv\in\Yc^n : s_i+y_i - x_i \in \alphab{s} \textrm{ for all }i\},
\end{equation}
where $s_i \in \Sc_i$ is the state of the battery at time $i$, determined by $\xv$ and $\yv$ according to (\ref{eq:battery_filing}).
\end{definition}
The task of the EMU is therefore to choose a particular sequence in the $\feasibleset$ for a given power consumption realization $\xv$. The structure of the particular choice determines the policy implemented by the EMU and is captured by the following definition.
\begin{definition}
Given an EMU with a battery of capacity $\beta$ the {\it set of stable battery policies} is the set of mappings between the energy consumption sequences and the set of stable energy request sequences given by
\begin{equation} \label{eq:stable_battery_policies}
  \alphab{P}_\beta \defeq \{ P_{\beta} : \alphab{s}\cartesianProd \Xc^n \to \feasibleset\}.
\end{equation}
\end{definition}
Since $\yrv^n$ is known by the UP, the information about the energy consumption of the user that the UP acquires via the energy request is given by the mutual information $I(\xrv^n; \yrv^n)$. As in \cite{varodayan2011smart} this motivates the following definition of privacy.

\begin{definition}
Given an EMU operating with the stable battery policy $P_\beta$, the information about the consumption of the user, $\xrv^n$, that is leaked to the UP is the \emph{information leakage rate} given by
\begin{equation}
\label{eq:ILR}
    \filr \defeq \frac{1}{n} I(\xrv^n; \yrv^n).
\end{equation}
\end{definition}

In a SM privacy context, the aim of the EMU is to choose a stable battery policy $P_\beta$ that minimizes the information leakage rate, i.e. maximizes the privacy of the user. Note that the information leakage depends on the joint probability distribution of $\xrv^n$ and $\yrv^n$. In general, the evaluation of (\ref{eq:ILR}) yields involved expressions that are difficult to evaluate \cite{6620443}. For that reason previous results \cite{gomez2013privacy,tan2012smart,varodayan2011smart} tend to consider simple probabilistic models, e.g. memoryless processes, to evaluate the mutual information. In the remaining of the paper we analyze the privacy guarantees for general discrete-time random processes modelling the user consumption. To that end, we model the EMS with a battery of capacity $\beta$ as a non-probabilistic finite-state channel \cite{ahlswede1987optimal}. The rationale for this approach and the equivalence between the EMS and a non-probabilistic channel are discussed in the following section.


\vspace{-2mm}

\subsection{Equivalence with the trapdoor channel}

\begin{figure}[t]
  \centering
  \includegraphics[width=\columnwidth]{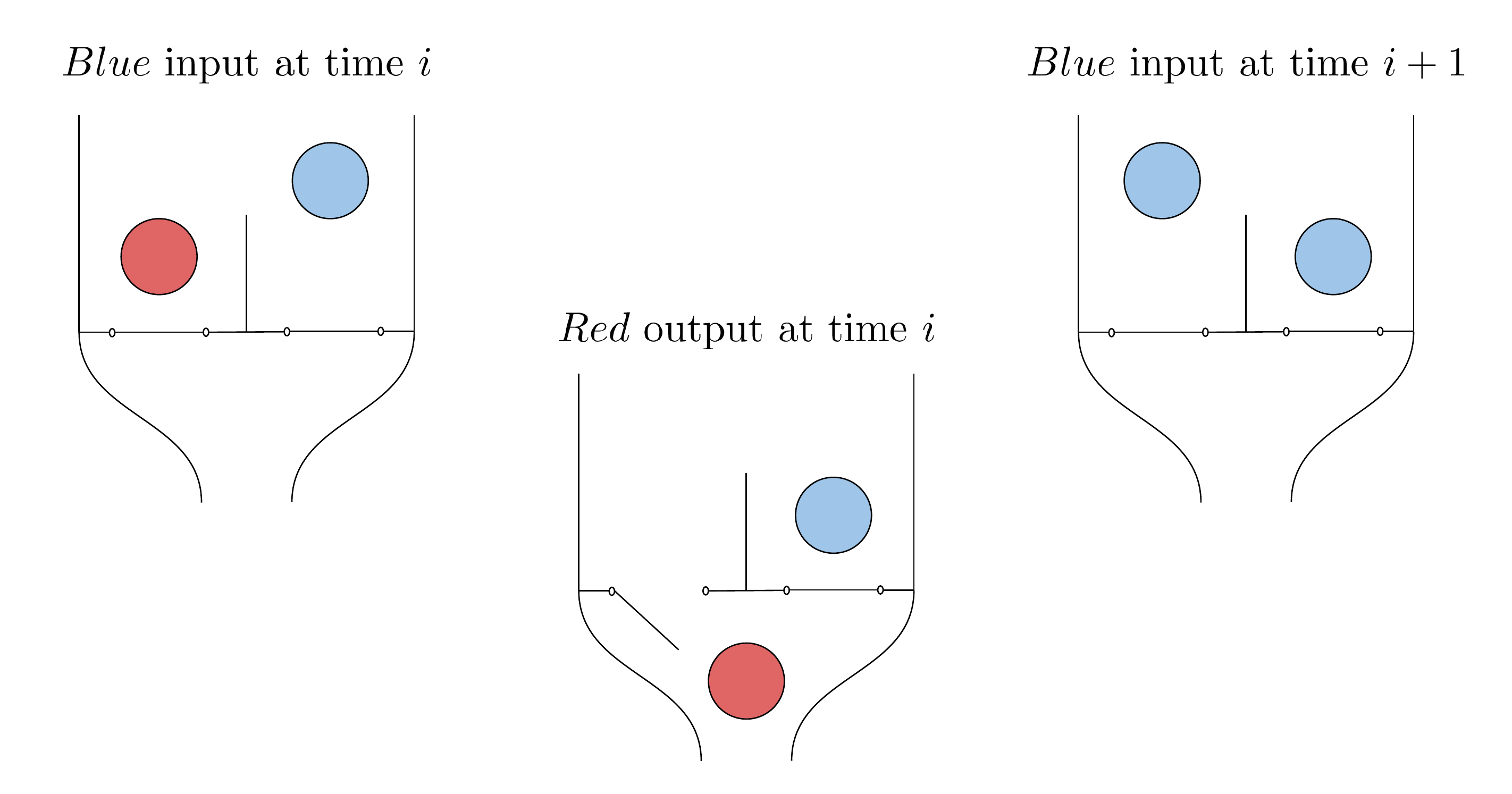}
  \caption{\!\!\!\cite{robert1990ash} Diagram depicting the operation of a trapdoor channel with $\beta = 1$.}
\end{figure}

The \emph{trapdoor channel} \cite{ahlswede1987optimal,robert1990ash} is defined as a box containing $b_0$ \emph{blue} balls and $\beta-b_0$ \emph{red} balls. The operation of the channel is depicted in Fig. 2.  At time $i$ a new ball $\xr_i$ coloured \emph{blue} or \emph{red} is thrown into the box. Immediately after, one of the $\beta+1$ balls inside the box is selected and taken out of the box. Let $\yr_i$ denote the ball extracted at time $i$. Following this model, the number of \emph{blue} balls inside the box at time $i$ is given by
\begin{equation} \label{eq:trapdoor_filing}
  b_i = b_0 + \sum_{k=0}^{i-1}\typedcard{\xr_k}{bl} - \sumrange{k}{0}{i-1} \typedcard{ \yr_k }{bl},
\end{equation}
where the indicator function $bl(\cdot)$ equals $1$ when its argument is coloured \emph{blue} and $0$ otherwise. Similarly, the number of \emph{red} balls inside the box at time $i$ is given by $r_i = \beta - b_i$. Replacing $b_i = \beta - r_i$ and $b_0 = \beta - r_0$ into (\ref{eq:trapdoor_filing}) yields:
\begin{equation} \label{eq:trapdoor_filing2}
  r_i = r_0 + \sum_{k=0}^{i-1} \big( \typedcard{\yr_k}{bl} - \typedcard{\xr_k }{bl} \big).
\end{equation}
The number of \emph{red} balls inside the box is bounded between $0$ and $\beta$. For a box of capacity $\beta$ the set of \emph{stable output balls} $\alphab{y}^n_\beta\group{r_0,\xv}$ is defined as the set of outputs $\yv^n \in \Yc^n$ than can be pulled out of the box given an initial state $r_0$ and an input sequence $\xv\in\Xc^n$, \ie\
\begin{equation} \label{eq:stable_trapdoor_outputs}
  \alphab{y}^n_\beta\group{r_0,\xv} = \{ \yv : r_i+\typedcard{y_i}{bl}-\typedcard{x_i}{bl} \in \alphab{R} \textrm{ for all }i\},
\end{equation}
where $\alphab{R}=\{0,...,\beta\}$. 

It is easy to see that for the case in which $\gamma = \alpha = 1$ the EMS with a battery of capacity $\beta$ described in Section \ref{sec:sys_model} is equivalent to the \textit{trapdoor channel} of capacity $\beta$. The set of balls inside the box determines the state of the trapdoor channel, and similarly, the amount of energy stored in the battery determines the state of the EMS channel. For the case in which $\alpha=\gamma=1$ both systems are equivalent since requesting energy from the grid corresponds to extracting a ball from the trapdoor channel. Similarly, replacing a ball from the trapdoor channel in (\ref{eq:trapdoor_filing2}) corresponds to charging the battery of the EMS in (\ref{eq:battery_filing}).

\section{Privacy With An Arbitrary Energy Consumption}
\label{sec:pr_arbit}

In this section, we provide bounds on the information leakage rate when no restrictions are imposed on the probability law of $\xrv^n$. We first propose the construction of a \emph{stable battery policy} $P^*_\beta$ and  characterize an upper bound on the information leakage rate \ilr\ induced by $P^*_\beta$ and any arbitrary random process $\xrv^n$. Furthermore, we show the tightness of the upper bound by constructing a random process $\randomvect{\hat{x}}$ whose leakage is tight with respect to the upper bound. Moreover, the leakage rate induced by the random process $\randomvect{\hat{x}}$ is shown to be independent of the employed battery policy $P_\beta \in \alphab{P}_\beta$. This shows that the upper bound is tight with respect to the minimum \emph{information leakage rate} a \emph{stable battery policy} $P_\beta$ can guarantee for general random processes $\xrv^n$.

The approach to policy construction in this section is similar to the code construction in \cite{ahlswede1987optimal} where a trapdoor channel with a box of size $\hat{\beta}$ is considered. Therein, at every time instant a ball numbered $1,2,...,\hat{\alpha}$ is introduced into the box and one of the $\hat{\beta}+1$ inside the box is extracted. In \cite[Section II]{ahlswede1987optimal} the case in which the box acts as a jammer trying to obstruct the communication process between a sender inserting the balls into the box and a receiver drawing the output balls is studied. Therein, the ball extracted from the box is selected in order to minimize the mutual information between the input and the output. Note that the extraction criteria is not probabilistic  and is instead analyzed using combinatorial tools. Remarkably, in \cite[Proposition 1]{ahlswede1987optimal} it is shown that the \emph{Shannon capacity} $C_{\hat{\beta}}$ of such channels is lower bounded by
\begin{equation}
  C_\beta \geq \frac{\log \hat{\alpha}}{\hat{\beta}+1}.
\end{equation}
Moreover, when $\hat{\alpha}=2$ the capacity is upper bounded by
\begin{equation}
  C_\beta \leq \frac{1}{\hat{\beta}+1}.
\end{equation}

In \cite{ahlswede1987optimal} the output is constrained to permutations of the input sequence. In our setting, the sum of the output sequence is bounded by the sum of the input sequence and the size of the battery. However, the output is not required to contain the same symbols as the input . For that reason, the approach in \cite{ahlswede1987optimal} requires some modification but the main idea remains. The derivation is presented in the next section.


\subsection{Upper bound on the information leakage rate}
We propose a battery policy based on the code construction in \cite{ahlswede1987optimal}. The codebook proposed in the trapdoor channel context is the counterpart of the battery policy in the smart meter case. The proposed policy structures the energy request sequences according to the output alphabet defined below.

\begin{definition}
Consider the set of codewords of length $l$ constructed by repetitions of $0$ or $\alpha$ symbols, \ie\ $\alphab{o}_l = \{(0,0,\cdots,0),(\alpha,\alpha,\cdots,\alpha)\}$. For $n=lm$, we define the {\it block repetition alphabet} as the set $\alphab{O}^m_l$ of sequences obtained by the $m$-fold concatenation of codewords of length $l$. Specifically
\begin{equation}
    \alphab{o}^m_l = \alphab{o}_l\cartesianProd\alphab{o}_l\cartesianProd...\cartesianProd\alphab{o}_l.
\end{equation}
\end{definition}
We now define a stable policy that maps the energy consumption of the user to the output sequences constructed with the {\it block repetition alphabet} $ \alphab{o}^m_l $.
 
\begin{definition}
\label{def:BBP}
A \emph{block battery policy} $P_\beta^*$ is a mapping of the form
\begin{equation}
    P^*_\beta : \alphab{S}\cartesianProd\alphab{x}^n \to \alphab{o}^m_l \cap \alphab{y}^l_\beta\group{s_0,\xv}.
\end{equation}
\end{definition}
Note that a block battery policy is nothing more than a strategy to assign to each input sequence a stable energy request sequence constructed with a block repetition alphabet. With these definitions at hand we now provide the following privacy guarantee. 
\begin{theorem}
\label{th:UB}
    Consider an EMS with a battery of capacity $\beta$ and initial state $s_0 \in \Sc$. Let $\xrv^n$ be a random process with $\xr_i$ taking values in $\Xc=\{0,1,...,\alpha\}$ for $i=1,2, \ldots n$ and $P^*_\beta$ a \emph{block battery policy} as described in Definition \ref{def:BBP}. Then for $l \leq \floor{\wlen}$ at least one policy $P^*_\beta$  exists such that 
    \begin{equation}
        \filrop \leq \oo{\floor{\wlen}}.
    \end{equation}
\end{theorem}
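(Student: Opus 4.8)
The plan is to exhibit a block battery policy $P^*_\beta$ whose output is \emph{constant on each block} of length $l$, and to argue that such a policy can always be realized (stably) when $l \le \floor{\wlen}$, and then bound the leakage by counting how much the output can vary. The key observation is that if the output sequence $\yrv^n$ is drawn from $\alphab{o}^m_l$, then it is entirely determined by the $m = n/l$ binary choices (block is all-$0$ or all-$\alpha$), so $H(\yrv^n) \le m = n/l$ bits regardless of $\xrv^n$. Since $I(\xrv^n;\yrv^n) \le H(\yrv^n)$, this immediately gives $\filrop \le \tfrac{1}{n} H(\yrv^n) \le \tfrac{1}{l}$, and taking $l = \floor{\wlen}$ yields the claimed $\oo{\floor{\wlen}}$. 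So the entire content of the theorem reduces to \emph{feasibility}: showing that for every input block there is a choice of all-$0$ or all-$\alpha$ output that keeps the battery state in $\Sc = \{0,\ldots,\beta\}$ throughout the block, i.e. that $\alphab{o}^m_l \cap \feasibleset$ is nonempty for a suitable assignment, so that $P^*_\beta$ is a well-defined map into that set as required by Definition \ref{def:BBP}.

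**Feasibility of the constant-output policy.** I would analyze one block at a time. Suppose at the start of a block the battery state is $s \in \{0,\ldots,\beta\}$, and the input over the block is $x_1,\ldots,x_l$ with each $x_j \in \{0,\ldots,\alpha\}$. If the EMU outputs $0$ for all $l$ steps, the running state after $j$ steps is $s - \sum_{k\le j} x_k$, which is non-increasing and stays in $\Sc$ iff it never goes below $0$, i.e. iff $s \ge \sum_{k=1}^{l} x_k$; the worst case is $\sum x_k = l\alpha$. If the EMU outputs $\alpha$ for all $l$ steps, the running state after $j$ steps is $s + \sum_{k\le j}(\alpha - x_k)$, which is non-decreasing and stays in $\Sc$ iff it never exceeds $\beta$, i.e. iff $s + \sum_{k=1}^{l}(\alpha - x_k) \le \beta$; the worst case is $\sum x_k = 0$, giving the requirement $s + l\alpha \le \beta$. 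I want to show that for \emph{every} value of $s \in \{0,\ldots,\beta\}$ and every input, at least one of the two options is stable. The all-$0$ option fails only if $s < \sum x_k \le l\alpha$, and the all-$\alpha$ option fails only if $s + \sum x_k > \beta$, equivalently $s > \beta - \sum x_k \ge \beta - l\alpha$. For both to fail we would need $\beta - l\alpha < s$ \emph{and} $s < l\alpha$ simultaneously possible, which requires $\beta - l\alpha < l\alpha - 1$, i.e. $\beta < 2l\alpha - 1$, i.e. $l > (\beta+1)/(2\alpha)$. Hmm — this only gives feasibility for $l \le (\beta+1)/(2\alpha)$, a factor of two short of the claimed $(\beta+1)/\alpha$.

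**Closing the factor-of-two gap.** The resolution must be that the policy does not have to pick the block output \emph{independently} per block but can coordinate the choice across blocks to control where the state sits at block boundaries — or, more likely, that one should track the state more carefully using the fact that an all-$0$ block lowers the state by at most $l\alpha$ and an all-$\alpha$ block raises it by at most $l\alpha$, so by alternating appropriately the boundary state can be kept in a good sub-interval. Concretely, I expect the right argument to be: maintain the invariant that the state at each block boundary lies in $\{0,\ldots,\beta\}$, and observe that after an all-$0$ block the boundary state lies in $[0,\, s]$, while after an all-$\alpha$ block it lies in $[s,\,\beta]$ — wait, more useful is that after an all-$\alpha$ block with worst-case input the state rises by exactly $l\alpha$, so if before the block $s \le \beta - l\alpha$ the all-$\alpha$ option is safe, and after it the new state is at most $\beta$ but at least $s$; symmetrically for all-$0$. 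So the plan is: whenever the current boundary state $s$ satisfies $s \ge l\alpha$ — wait, I need $s \ge \sum x_k$ but $\sum x_k$ can be as small as $0$, so all-$0$ is safe whenever $s \ge l\alpha$ suffices but is also safe for smaller $s$ if the input is small. The clean sufficient dichotomy is: if $s \le \floor{(\beta+1)/\alpha}\cdot\alpha - \alpha$... I'll instead argue via the pigeonhole/greedy rule "output all-$0$ if $s$ is in the upper half of $[0,\beta]$, all-$\alpha$ if in the lower half," and verify that with $l\alpha \le \beta+1$ i.e. $l \le \floor{(\beta+1)/\alpha}$ this keeps the boundary state always within $[0,\beta]$ and every intermediate state within $[0,\beta]$ too. \textbf{The main obstacle is exactly this feasibility bookkeeping}: proving that the threshold rule never drives the running battery state out of $\Sc$ over a full block when $l \le \floor{\wlen}$, which requires choosing the threshold correctly and carefully checking the extreme inputs ($\sum x_k \in \{0, l\alpha\}$) against both the boundary-state invariant and the within-block excursions. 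Once feasibility is established the leakage bound $I(\xrv^n;\yrv^n) \le H(\yrv^n) \le \log|\alphab{o}^m_l| = m$ is immediate and the theorem follows by dividing by $n = lm$ and taking $l = \floor{\wlen}$.
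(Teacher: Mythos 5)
Your overall skeleton matches the paper's: bound the leakage by $\tfrac{1}{n}H(\yrv^n)\le\tfrac{1}{n}\log|\alphab{o}^m_l|=1/l$ and reduce the theorem to per-block feasibility of a constant (all-$0$ or all-$\alpha$) output. The genuine gap is in your feasibility analysis, where you manufacture a factor-of-two deficit that is not actually there and then leave it unclosed. Writing $X=\sum_{k=1}^{l}x_k$ for the block's total consumption, the all-$0$ option is stable iff $s\ge X$ and the all-$\alpha$ option is stable iff $s\le \beta-l\alpha+X$; both conditions involve the \emph{same} $X$. You instead evaluated each condition at its own worst-case input ($X=l\alpha$ for the first, $X=0$ for the second), which implicitly lets two different inputs defeat the two options. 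Coupling them correctly: both options fail only if $\beta-l\alpha+X<s<X$, and since $s$, $X$, $\beta$, $l\alpha$ are integers this open interval of length $l\alpha-\beta$ contains an integer only when $l\alpha\ge\beta+2$; hence for $l\alpha\le\beta+1$, i.e.\ $l\le\lfloor(\beta+1)/\alpha\rfloor$, at least one option is always stable. Equivalently, the single integer $t=s-X$ always satisfies $t\ge0$ or $t\le\beta-l\alpha$ once $\beta-l\alpha\ge-1$, which is exactly the paper's argument. No coordination across blocks is needed: feasibility holds for \emph{every} starting state $s\in\Sc$, and the chosen option leaves the boundary state in $\Sc$, so induction over blocks is immediate.

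Your proposed repair --- a threshold rule that outputs all-$0$ when $s$ is in the upper half of $[0,\beta]$ and all-$\alpha$ otherwise, ignoring the input --- would not work even if completed: with $l\alpha=\beta+1$, $s=\lceil\beta/2\rceil$ and $X=s+1$, the rule picks all-$0$, which fails ($s<X$), while all-$\alpha$ would have succeeded ($\beta-l\alpha+X=s$). Any valid rule must consult the block's input, which the policy is permitted to do since $P^*_\beta$ is a function of $\xv$. Moreover you explicitly flag the feasibility bookkeeping as an unresolved ``main obstacle,'' so as written the proposal only establishes the theorem for $l\le(\beta+1)/(2\alpha)$, not for the full claimed range $l\le\lfloor(\beta+1)/\alpha\rfloor$.
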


\begin{proof} 
Notice that the information leakage rate is upper bounded by
\begin{equation}
    \filrop = \frac{1}{n} \muinf{\xrv^n}{\yrv^n} \leq \frac{1}{n} \ent{\yrv^n}.
\end{equation}
Since $\yrv^n$ takes values in $\alphab{O}^m_l$ and $|\alphab{O}^m_l|=2^m$ the following  holds:
\begin{equation}
    \frac{1}{n} \ent{\yrv^n} \leq \frac{1}{n} \log|\alphab{O}^m_l| =  \frac{1}{n} \log\group{2^m} = \frac{m}{n} = \frac{1}{l}.
\end{equation}

We now show that when $l \leq \wlen$ there exists at least one \emph{block battery policy} $P^*_\beta$ for every initial state $s_0$ and consumption $\xv$. To prove this we establish that for every realization $\xv$ and initial state $s_0$ there exist an energy request sequence determined by $\yv \in \alphab{O}^m_l$ such that $\yv$ belongs to the set of stable energy requests \feasibleset. The strategy is to notice that $\alphab{O}^m_l \cap \feasibleset \not= \emptyset$ for $m=1$ and to then prove by induction that the non-emptiness of the intersection holds for $m\geq1$.

The intersection $\{(0,0,\cdots,0),(\alpha,\alpha,\cdots,\alpha)\} \cap \alphab{y}^l_\beta\group{s_0,\xrv^l}$ is non-empty if and only if either the sequence $(0,0,\cdots,0)$ or $(\alpha,\alpha,\cdots,\alpha)$ belong to \feasibleset. Jointly with (\ref{eq:stable_battery_outputs}) this implies that either
\begin{equation} \label{eq:y_is_0}
       s_i+0 - x_i \in \alphab{s}
\end{equation}
or
\begin{equation} \label{eq:y_is_lambda}
    s_i+\alpha - x_i \in \alphab{s}
\end{equation}
holds for $i \leq l$. In the first case, described in (\ref{eq:y_is_0}), we have that $0 - x_i \leq 0$ for $i=0,\cdots,l-1$.  Hence, the energy stored in the battery, $s_i$, decreases monotonically. Therefore, all $s_i$ belong to $\alphab{s}$ when $s_i \geq 0$ on the last time step, i.e.,
\begin{equation} \label{eq:y2_is_0}
    0 \leq s_0 - \sum_{i=0}^{l-1}x_i.
\end{equation}
Similarly, in the case described by (\ref{eq:y_is_lambda}), we have that $\alpha-x_i \geq 0$ and the  energy stored increases monotonically. It is then sufficient to show that
\begin{equation} \label{eq:y2_is_lambda}
    s_0 - \sum_{i=0}^{l-1} x_i \leq \beta - \alpha l.
\end{equation}
When $\beta - \alpha l \geq -1$ every integer $s_i$ satisfies at least one of the inequalities given by (\ref{eq:y2_is_0}) and (\ref{eq:y2_is_lambda}). This ensures that either (\ref{eq:y2_is_0}) or (\ref{eq:y2_is_lambda}) hold for every $s_0 \in \alphab{s}$ and $\xv \in \alphab{x}^l$, and therefore, the intersection $\alphab{o}^m_l \cap \feasibleset$ is non-empty. This completes the proof for $m=1$. The induction for $m\geq1$ is straightforward as the proof for $m=1$ holds for every initial state $s_0$. \end{proof}

The upper bound derived in Theorem \ref{th:UB} is depicted for different battery sizes in Fig. 3. It is interesting to note that the privacy guarantees increase significantly for small values of $\beta/\alpha$ but the benefit vanishes as the size of the battery increases. 

\begin{figure}[t]
\label{fig:th1}
  \centering
  \includegraphics[width=\columnwidth]{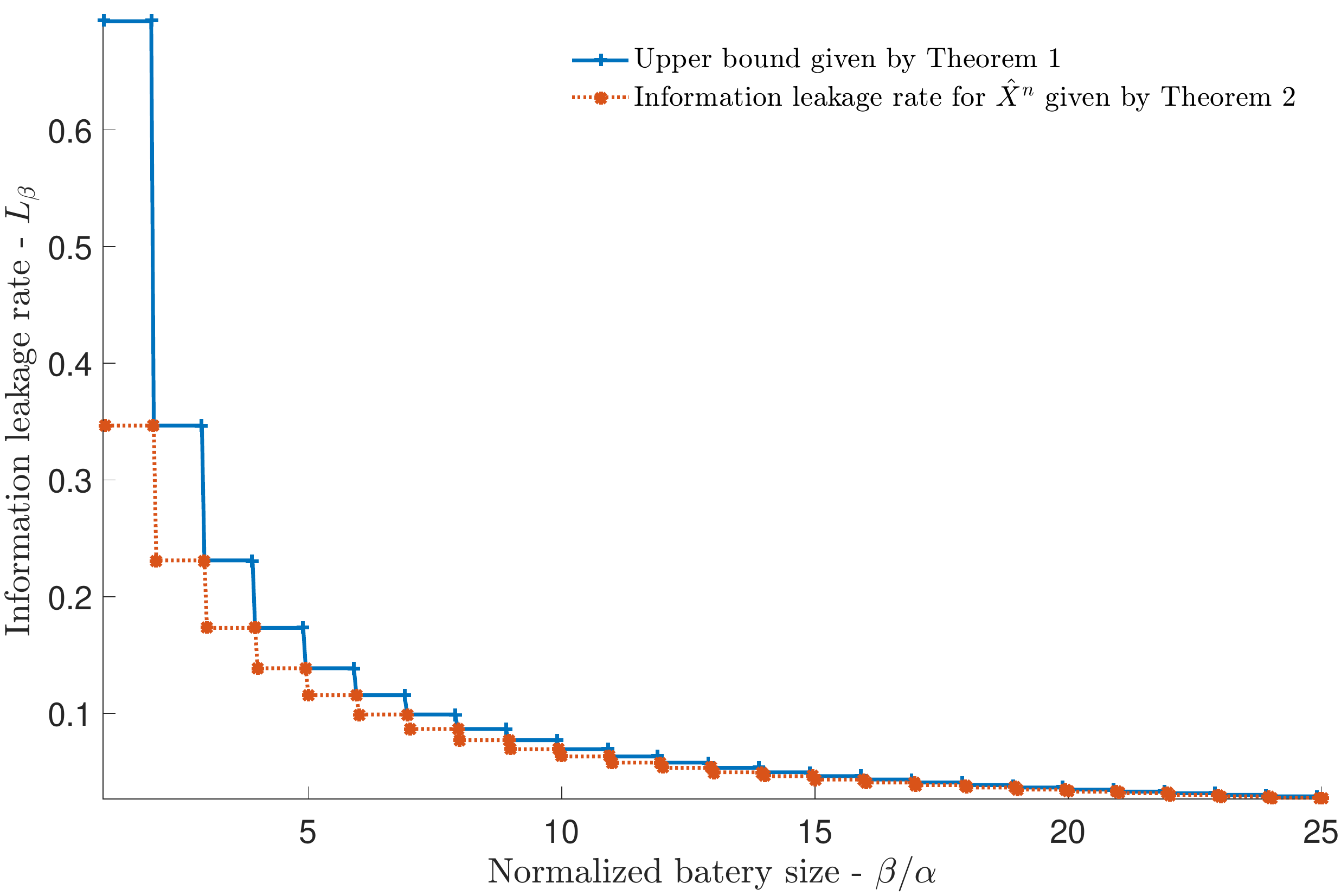}
  \caption{Upper bound on the information leakage rate of an EMS as a function of the ratio between the battery size and the peak power consumption. }
\end{figure}

\subsection{Tightness of the upper bound}

We now study the tightness of the upper bound presented in Theorem \ref{th:UB}. To this end, we construct a random process modelling the energy consumption of the user that is tight with respect to the result in Theorem \ref{th:UB} for every battery policy $P_\beta \in \alphab{P}_\beta$.

\begin{theorem}
\label{th:worst_case}
Consider an EMS with a battery of capacity $\beta$ and initial state $s_0$. Let $\randomvect{\hat{x}}^n$ be a random process taking uniformly distributed values in  $\alphab{O}^m_l$ with $l=\ceil{\wlen}$. Let $P_\beta$ be a \emph{stable battery policy}. Then
    \begin{equation}
        \filrox = \frac{1}{\ceil{\wlen}}.
    \end{equation}
\end{theorem}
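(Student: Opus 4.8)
The plan is to show that when $\hat{X}^n$ is uniform on $\alphab{O}^m_l$ with $l = \ceil{\wlen}$, any stable battery policy $P_\beta$ must act as the identity (up to a bijection that preserves entropy) on these inputs, so that $I(\hat{X}^n; Y^n) = H(\hat{X}^n) = m\log 2$ and the rate is $m/n = 1/l$. The upper bound $\filrox \leq \frac{1}{n}H(Y^n) \leq \frac{1}{n}\log|\alphab{Y}^n| $ is not directly useful here since $Y^n$ need not lie in the block repetition alphabet; instead I would bound $I(\hat{X}^n;Y^n) = H(\hat{X}^n) - H(\hat{X}^n \mid Y^n)$ and argue $H(\hat{X}^n\mid Y^n) = 0$, i.e.\ the input is a deterministic function of the output. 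Equivalently, I would show that for $l = \ceil{\wlen}$ the only stable energy request sequence available on each length-$l$ block, given that the input block is either $(0,\dots,0)$ or $(\alpha,\dots,\alpha)$, is forced — distinct input blocks produce distinct (indeed, forced) output blocks — so $Y^n$ determines $\hat{X}^n$.

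The key computation is the block-level rigidity argument, and it mirrors the feasibility analysis in the proof of Theorem~\ref{th:UB} but run in reverse. Consider a single block of length $l$ with input either all-zeros or all-$\alpha$, starting from battery state $s_0$. Using the state recursion (\ref{eq:battery_filing}), at each step $i$ the stability constraint $s_i + y_i - x_i \in \alphab{s} = \{0,\dots,\beta\}$ must hold. When the input block is all-$\alpha$, feasibility over the whole block forces the battery not to overflow; the tightest such requirement, accumulated over $l$ steps, is of the form $s_0 - \sum x_i \le \beta - \alpha l$ as in (\ref{eq:y2_is_lambda}), but now with $l = \ceil{\wlen}$ we have $\beta - \alpha l \le \beta - (\beta+1) = -1 < 0$, so this inequality fails for the all-$\alpha$ input unless the policy compensates by requesting $y_i < \alpha$ somewhere — but within a block-repetition alphabet the only allowed blocks are all-zero or all-$\alpha$. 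Hence for the all-$\alpha$ input block the only feasible choice is the all-zero output block; symmetrically, for the all-zero input block, feasibility at the low end forces (by the analogue of (\ref{eq:y2_is_0}) with $l=\ceil{\wlen}$) the all-$\alpha$ output block — or more carefully, one checks that exactly one of the two candidate blocks is stable for each input, and the two inputs select opposite (hence distinct) outputs. I would verify this case distinction carefully: for a given $s_0$, exactly one of (\ref{eq:y2_is_0})-type and (\ref{eq:y2_is_lambda})-type constraints can hold simultaneously with the stricter length $l=\ceil{\wlen}$, and the choice is swapped between the two possible input blocks.

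Once block-level rigidity is established — $Y^n$ restricted to each block is a deterministic, injective function of the corresponding input block $\hat{X}$-block — it follows that $\hat{X}^n$ is recoverable from $Y^n$, so $H(\hat{X}^n \mid Y^n) = 0$ and $I(\hat{X}^n; Y^n) = H(\hat{X}^n)$. Since $\hat{X}^n$ is uniform on a set of size $2^m$, $H(\hat{X}^n) = m\log 2$ (or $m$ bits), giving $\filrox = \frac{1}{n} H(\hat{X}^n) = \frac{m}{n} = \frac{1}{l} = \frac{1}{\ceil{\wlen}}$. I would also note this value is independent of $P_\beta$, which is the point of the theorem: no stable policy can do better against this particular source. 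The main obstacle I anticipate is pinning down the boundary case in the feasibility analysis — ensuring that for $l = \ceil{\wlen}$ (as opposed to $l \le \floor{\wlen}$ in Theorem~\ref{th:UB}) exactly one output block is feasible for each input block rather than zero or two, which requires carefully tracking how $\ceil{\wlen}$ relates to $(\beta+1)/\alpha$ and checking that the accumulated slack is exactly exhausted; a secondary subtlety is confirming that the forced output on one block leaves the battery in a state from which the next block's forced output is still well-defined, so the induction over the $m$ blocks goes through cleanly.
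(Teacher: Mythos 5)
Your decomposition $I(\hat{X}^n;Y^n)=H(\hat{X}^n)-H(\hat{X}^n\mid Y^n)$ and the target $H(\hat{X}^n\mid Y^n)=0$ match the paper, but your argument for that step has a genuine gap. The theorem quantifies over \emph{arbitrary} stable battery policies $P_\beta$, so the request sequence $Y^n$ is not confined to the block repetition alphabet; it may be any element of $\feasibleset\subseteq\Yc^n$. Your ``block-level rigidity'' argument only compares the two candidate output blocks $(0,\dots,0)$ and $(\alpha,\dots,\alpha)$ (``within a block-repetition alphabet the only allowed blocks are\dots''), so at best it would establish the claim for block battery policies $P^*_\beta$, not for general $P_\beta$. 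What is actually needed --- and what the paper proves --- is that the full feasible sets $\alphab{y}^l_\beta\group{s_0,\xv}$ associated with the two possible input blocks are \emph{disjoint}: if one and the same output block $\yv$ (arbitrary, not necessarily constant) were stable for both inputs, the two final battery states would differ by $s_l-z_l=l\alpha\geq\beta+1$, so they cannot both lie in $\{0,\dots,\beta\}$. Disjointness of the feasible sets is what makes the input recoverable from whatever output an arbitrary policy emits, and it extends over the $m$ blocks by induction since the argument holds for every initial state.

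A secondary error: your forcing direction is reversed. With $l=\ceil{\wlen}$, the all-$\alpha$ input block paired with an all-zero request drives the state to $s_0-l\alpha<0$ (outage), while the all-zero input paired with an all-$\alpha$ request drives it to $s_0+l\alpha>\beta$ (overflow); the feasible constant request for each constant input block is the one \emph{equal} to the input, not the opposite one. This contradicts your own opening remark that the policy ``acts as the identity'' and would derail the case analysis you propose. Relatedly, the ``exactly one versus zero or two feasible output blocks'' boundary issue you flag as the main obstacle dissolves once you argue disjointness rather than uniqueness: non-emptiness of $\feasibleset$ is automatic (e.g.\ $\yv=\xv$ keeps the state constant and is always stable since $\gamma\geq\alpha$), and disjointness alone already gives $H(\hat{X}^n\mid Y^n)=0$ and hence $\filrox=\frac{1}{n}H(\hat{X}^n)=\frac{m}{n}=\frac{1}{\ceil{\wlen}}$.
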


\begin{proof}
    We expand $L_\beta$ as
    \begin{equation}
      \frac{1}{n} \muinf{\xrv^n}{\yrv^n} = \frac{1}{n}\ent{\xrv^n} - \frac{1}{n}\ent{\xrv^n|\yrv^n}.
    \end{equation}
   When $\xrv^n$ is uniformly distributed over the alphabet $\alphab{O}^m_l$ it yields
    \begin{equation}
      \frac{1}{n}\ent{\xrv^n} = \frac{1}{n}m = \frac{1}{l}.
    \end{equation}
   We now show that the equivocation rate $\frac{1}{n}\ent{\xrv^n|\yrv^n}$ is $0$ when $\xrv^n$ takes values in $\alphab{o}^m_l$ with $l>{\beta}/{\alpha}$. We prove by induction that when the input realization $\xv$ belongs to $\alphab{O}^m_l$ with $l>{\beta}/{\alpha}$, the sets \feasibleset\ of stable output words generated by different consumption sequences are disjoint, \ie
    \begin{equation}
    	\alphab{y}^n_\beta\group{s_0,\hat{\xv}} \cap \alphab{y}^n_\beta\group{s_0,\xv} = \emptyset \textrm{ for } \hat{\xv} \not= \xv.
    \end{equation}
As a result, any request sequence $\yv \in \alphab{y}^n_\beta\group{s_0,\xv}$ unequivocally determines the generating input $\xv$. In other words, given an output sequence $\yv$ there is no uncertainty about the input $\xv$, and therefore, the equivocation rate $\frac{1}{n}\ent{\xrv^n|\yrv^n}$ is $0$.
    
    For $m=1$ there are two possible inputs $(0,0,\cdots,0)$ and $(\alpha,\alpha,\cdots,\alpha)$. When $\xv=(0,0,\cdots,0)\in\Oc^1_l$ the energy stored in the battery at time $l$ is given by
    \begin{equation} \label{eq:any_lw1}
      s_l = s_0 + \sum_{i=0}^{l-1}\group{y_i - 0 }.
    \end{equation}
    Similarly, when $\xv =(\alpha,\alpha,\cdots,\alpha)\in\Xc^l$ the energy stored in the battery at time $l$ is given by
    \begin{equation} \label{eq:any_lw2}
      z_l = s_0 + \sum_{i=0}^{l-1}\group{y_i - \alpha }.
    \end{equation}
    Taking the difference between (\ref{eq:any_lw1}) and (\ref{eq:any_lw2}) yields:
    \begin{equation}
      s_l - z_l = \sum_{i=0}^{l-1}\alpha = l\alpha.
    \end{equation}
    When $z_l \in \alphab{s}$ we have that $s_l = z_l + l\alpha \geq l\alpha$, showing that for $l\alpha > \beta$ the events $z_l\in\alphab{s}$ and $s_l\in \alphab{s}$ do not occur simultaneously. This implies that the set of output words belonging to $\alphab{y}^n_\beta\group{s_0, (0,0,\cdots,0)}$ and $\alphab{y}^n_\beta\group{s_0,(\alpha,\alpha,\cdots,\alpha)}$ is empty for every initial state $s_0$. Therefore the sets are disjoint and $\frac{1}{n}\ent{\xrv^n|\yrv^n} =0$. The proof for $m > 1$ follows by induction and noticing that the proof above is valid for every initial state $s_0$.
    %
\end{proof}

\section{Privacy With An Average Energy Constraint}
The information leakage rate bounds provided in Section \ref{sec:pr_arbit} do not impose any moment restriction on the random process modeling the energy consumption of the user. Indeed, they depend only on the range of the energy consumption and on the size of the battery. However, one of the most widely used energy consumption metrics is the average energy consumption over an arbitrary time interval. In fact, it is common for SMs to display this information to the user.
In the following, we particularize the results in Theorem \ref{th:UB} and Theorem \ref{th:worst_case} to the case in which the average energy consumption of the user is specified. Specifically, we analyze the impact of the average energy consumption on the privacy performance. We define the average energy consumption of the random process $\xrv^n$ as 
\begin{equation}
    \mu_n = \avg{ \frac{1}{n}\sum_{i=0}^{n-1}{\xr_i}}.
\end{equation}
Note that since we do not impose any stationarity condition on the random process $\xrv^n$, the average energy consumption is a function of the time index $n$. This agrees with the non-stationary nature observed in energy consumption profiles of users \cite{kalogridis2010privacy}.
\begin{figure}[t!]
  \centering
  \includegraphics[width=\columnwidth]{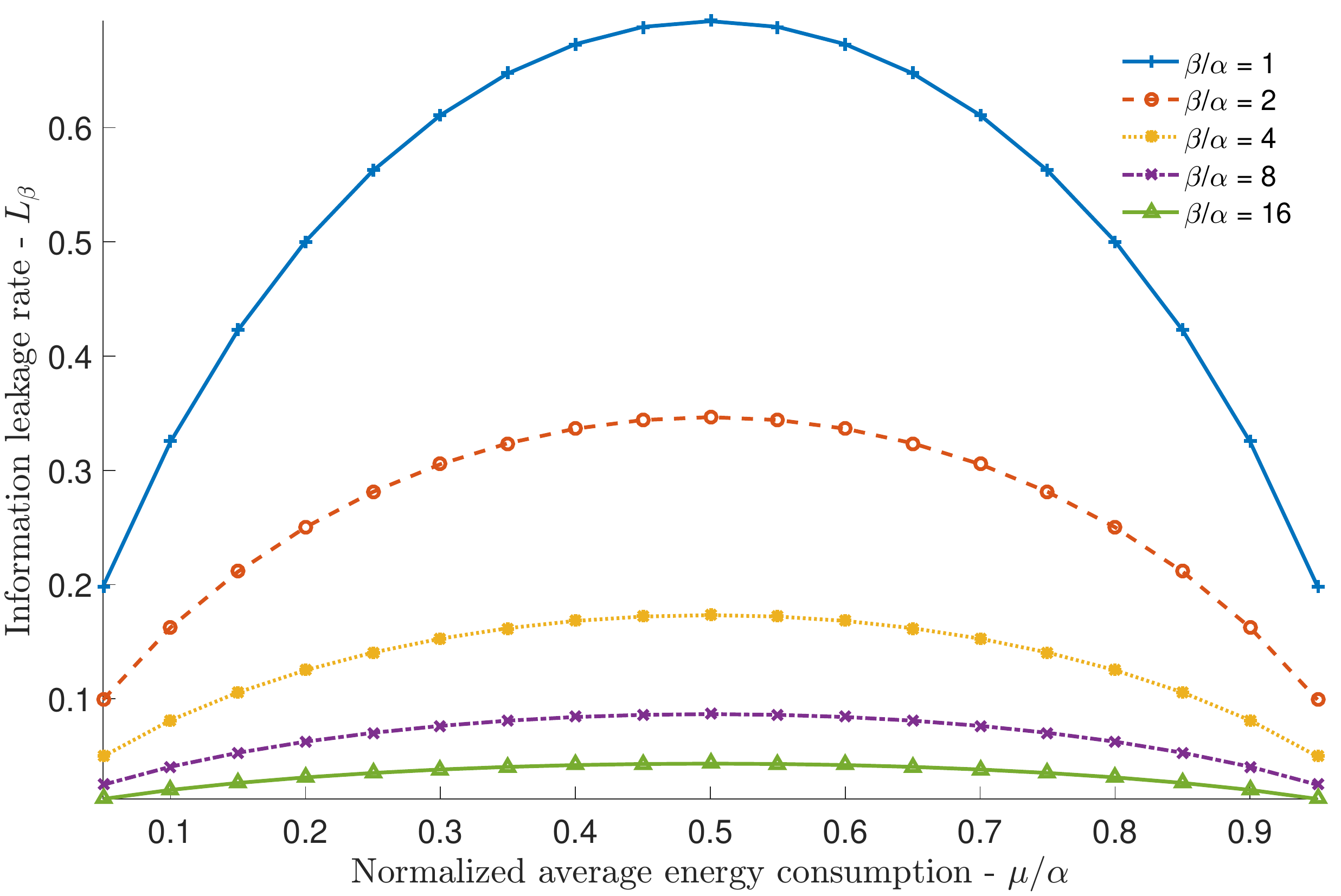}
  \caption{Upper bound on the information leakage rate of an EMS when $n\rightarrow\infty$ as a function of the average energy consumption of the user for different values of the ratio between the battery size and the peak power consumption.}
\end{figure}

\subsection{Upper bound on the information leakage rate}
The following result provides an upper bound on the  information leakage rate for random processes $\xrv^n$ with average energy consumption $\mu_n$.
\begin{theorem}
    \label{th:avg_UP}
    Consider a battery system with capacity $\beta$ and initial state $s_0$. Let $\xrv^n$ be a random process with \emph{average energy consumption} $\mu_n$. Let $P^*_\beta$ be a \emph{block battery policy}, then for $l \geq \floor{\wlen}$ at least one policy ${P}^*_\beta$ exists such that
    \begin{equation}
        \filrop \leq \frac{\max\left(\!\!H_2\!\group{\frac{\mu_n-\frac\beta n}{\alpha}},H_2\group{\frac{\mu_n+\frac\beta n}{\alpha}}\right)}{\floor{\wlen}},
    \end{equation}
 where $H_2(p)=-p\log_2p-(1-p)\log_2(1-p)$ denotes the binary entropy.   
\end{theorem}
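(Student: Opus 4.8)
The plan is to follow the skeleton of the proof of Theorem~\ref{th:UB}, namely $\filrop=\tfrac1n\muinf{\xrv^n}{\yrv^n}\le\tfrac1n\ent{\yrv^n}$, but to sharpen the crude estimate $\tfrac1n\ent{\yrv^n}\le\tfrac1n\log|\alphab{O}^m_l|=\tfrac1l$ used there by exploiting the fact that the average‑energy constraint $\mu_n$ forces the output sequence to be unbalanced. Throughout I would fix a block battery policy $P^*_\beta$ with block length $l=\floor{\wlen}$ (its existence is guaranteed by Theorem~\ref{th:UB}) and write $m=n/l$. The first step is to rewrite the output entropy: since every sequence in $\alphab{O}^m_l$ is a concatenation of $m$ blocks, each equal to $(0,\dots,0)$ or $(\alpha,\dots,\alpha)$, the output $\yrv^n$ is in bijection with the binary vector $\brv\in\{0,1\}^m$ whose $j$-th entry equals $1$ exactly when the $j$-th block is $(\alpha,\dots,\alpha)$, so $\ent{\yrv^n}=\ent{\brv}$. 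Writing $p_j=\Pr[\br_j=1]$ and $W=\sum_{j=1}^m\br_j$, subadditivity of entropy gives $\ent{\brv}\le\sum_{j=1}^m H_2(p_j)$, concavity of $H_2$ gives $\sum_{j=1}^m H_2(p_j)\le m\,H_2(\bar p)$ with $\bar p=\tfrac1m\sum_{j=1}^m p_j=\tfrac1m\avg{W}$, and dividing by $n=lm$ yields $\filrop\le\tfrac1l\,H_2(\bar p)$.

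The crux is then to localize $\bar p$ using the battery constraints. Since $P^*_\beta$ is stable, the occupancy $\sr_i=s_0+\sum_{k=0}^{i-1}(\yr_k-\xr_k)$ of (\ref{eq:battery_filing}) lies in $\{0,\dots,\beta\}$ for every $i$, in particular at $i=n$, and together with $s_0\in\{0,\dots,\beta\}$ this forces $\sum_{k=0}^{n-1}\yr_k-\sum_{k=0}^{n-1}\xr_k=\sr_n-s_0\in[-\beta,\beta]$. For an output in $\alphab{O}^m_l$ one has $\sum_{k=0}^{n-1}\yr_k=\alpha l\,W$, so taking expectations and dividing by $n\alpha$ gives $\tfrac{\mu_n-\beta/n}{\alpha}\le\bar p\le\tfrac{\mu_n+\beta/n}{\alpha}$. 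Since $H_2$ increases on $[0,\tfrac12]$ and decreases on $[\tfrac12,1]$, its maximum over this interval is attained at one of the endpoints (and if $\tfrac12$ is interior to the interval one simply uses $H_2(\bar p)\le1$), whence $H_2(\bar p)\le\max\big(H_2(\tfrac{\mu_n-\beta/n}{\alpha}),\,H_2(\tfrac{\mu_n+\beta/n}{\alpha})\big)$; combining with $l=\floor{\wlen}$ gives the claimed bound.

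I expect the localization of $\bar p$ to be the main obstacle. It is essential to use \emph{both} ends of the occupancy range $\{0,\dots,\beta\}$ when bounding $\sum\yr_k-\sum\xr_k$ (using only $\sr_n\ge0$, as in Theorem~\ref{th:UB}, gives no information about the average), and the identity $\sum\yr_k=\alpha l\,W$ relies on the block structure that $P^*_\beta$ enforces — a generic stable policy would not permit this clean accounting. The only other delicate point is the non‑monotonicity of the binary entropy, which is handled by the endpoint/midpoint case distinction noted above.
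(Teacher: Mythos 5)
Your proposal is correct and follows essentially the same route as the paper: bound the leakage by the output entropy, exploit the block structure of $\alphab{O}^m_l$ to reduce to a per-block binary entropy, and use $\sr_n-s_0\in[-\beta,\beta]$ to pin the mean request rate to within $\beta/n$ of $\mu_n$. Your subadditivity-plus-concavity derivation of $\tfrac1n\ent{\yrv^n}\le\tfrac1l H_2(\bar p)$ is in fact a cleaner justification of the paper's inequality (\ref{eq:ent_in_alph}), and your remark about the midpoint case of $H_2$ flags an edge case (when $\tfrac{\mu_n}{\alpha}\approx\tfrac12$) that the paper's proof silently ignores.
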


\begin{figure}[t!]
  \centering
  \includegraphics[width=\columnwidth]{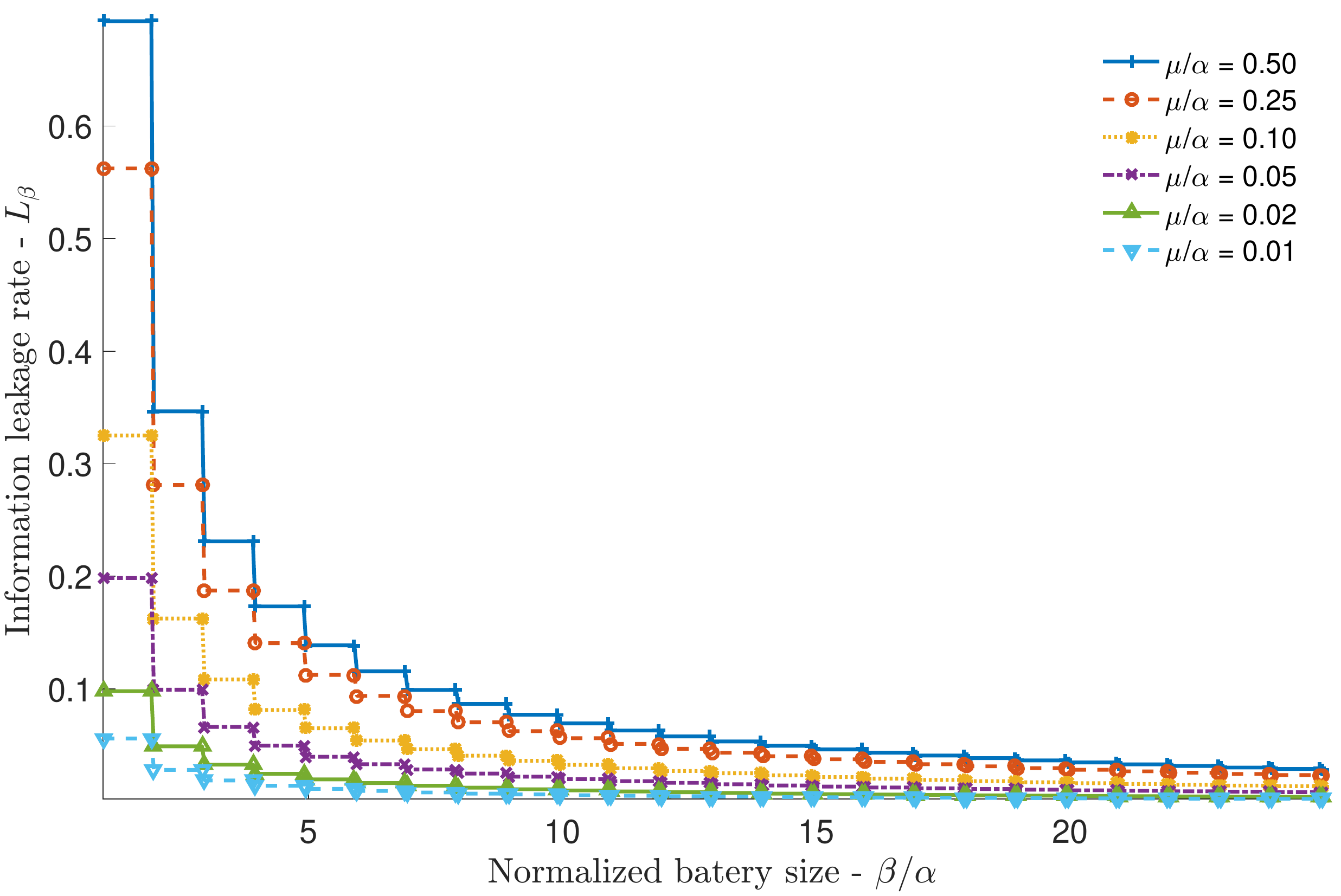}
  \caption{Upper bound on the information leakage rate of an EMS when $n\rightarrow\infty$ as a function of the ratio between the battery size and the peak power consumption for different values of the average energy consumption of the user}
\end{figure}

\begin{proof}
The entropy of a random process $\yrv^n$ taking values in $\alphab{O}^m_l$ is upper bounded by
\begin{IEEEeqnarray}{lCl}
	\label{eq:chain}
        \frac{1}{n}\!H\!\left(\yrv^n\right)\!&=&\!\frac{1}{n}\!\sum_{i=0}^{m-1}\!\!H\!\left( \yr_{il}, \ldots, \yr_{(i+1)l-1}|\yr_{0}, \ldots, \yr_{il-1}\right) \\
        \label{eq:cond}
         &\leq& \frac{1}{n}\!\! \sum_{i=0}^{m-1}\ent{\yr_{il}, \ldots, \yr_{(i+1)l-1}},
\end{IEEEeqnarray}
where (\ref{eq:chain}) follows by applying the chain rule and (\ref{eq:cond}) follows from the fact that conditioning reduces entropy. Notice that (\ref{eq:cond}) is the entropy of $m$ sequences $\yrv^l$ taking values in $\alphab{O}_l$, and therefore, the entropy of $\yrv^n$ is upper bounded by
\begin{equation}
\label{eq:ent_in_alph}
\frac{1}{n} \ent{\yrv^n}\leq \frac{1}{l}H_2\left(\frac{\avg{\frac{1}{n}\sum_{i=0}^{n-1}\yr_i}}{\alpha} \right),
\end{equation}
for the case in which each sequence $\yrv^l$ is independent and identically distributed, i.e. with distribution
\begin{equation}
\mathbb{P}\left[\yrv^l=(\alpha,\alpha, \ldots, \alpha)\right]=\frac{\avg{\frac{1}{n}\sum_{i=0}^{n-1}\yr_i}}{\alpha},
\end{equation}
and 
\begin{equation}
\mathbb{P}\left[\yrv^l=(0, 0, \ldots, 0)\right]=1-\mathbb{P}\left[\yrv^l=(\alpha,\alpha, \ldots, \alpha)\right].
\end{equation}
 We now bound the average energy requested from the grid as a function of the average energy consumption of the user and the battery size. Dividing (\ref{eq:battery_filing}) by $n$ and taking the expected value yields
    \begin{equation} \label{eq:avg_energy_constrain_transfer}
        \avg{\frac{1}{n}\sum_{i=0}^{n-1}\yr_i} = \avg{\frac{1}{n}\sum_{i=0}^{n-1}\xr_i} + \avg{\frac{\sr_n-s_0}{n}},
    \end{equation}
or equivalently
\begin{equation}
\label{eq:avg_bounds}
\mu_n - \frac{\beta}{n}\leq \mathbb{E}\left[\frac{1}{n}\sum_{i=0}^{n-1}\yr_i\right]\leq  \mu_n + \frac{\beta}{n}.
\end{equation}    
Notice now that for $l\leq\wlen$, and for every initial state $s_0 \in \alphab{s}$ and input realization $\xv \in \alphab{x}^n$ there exists a sequence $\yv \in \alphab{O}^m_l$ such that $\yv$ belongs to the set of \emph{stable energy requests} \feasibleset. This completes the proof.
\end{proof}

The upper bound on the information leakage rate when the average energy consumption of the user is known and $n\rightarrow\infty$ is illustrated in Fig. 4 and Fig. 5. As expected, the binary entropy term in Theorem 3 introduces concavity in the upper bound as shown in Fig. 4. Interestingly, Fig. 5 shows that the information leakage rate reduction as the size of the battery increases is less significant for extreme values of the average energy consumption.

\subsection{Tightness of the upper bound}

Proceeding in a similar fashion as in Section \ref{sec:pr_arbit} we now prove that the upper bound in Theorem \ref{th:avg_UP} is tight for a certain class of random processes modelling the energy consumption.

\begin{theorem}
  Consider a battery system with capacity $\beta$ and initial state $s_0$. Let $\randomvect{\hat{x}}^n$ be a random process with \emph{average energy consumption} $\mu_n$ and taking values in $\alphab{o}_l^m$ with $l = \ceil{\wlen}$. Let ${P}_\beta$ be a stable battery policy, then 
  \begin{equation}
      \filrox = \frac{1}{\ceil{\wlen}} H_2\group{\frac{\mu_n}{\alpha}}.
  \end{equation}
\end{theorem}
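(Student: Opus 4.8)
The plan is to mirror the structure of the proof of Theorem \ref{th:worst_case}, splitting the mutual information as
\begin{equation}
\filrox = \frac{1}{n}\ent{\randomvect{\hat{X}}^n} - \frac{1}{n}\ent{\randomvect{\hat{X}}^n \mid \yrv^n},
\end{equation}
and showing that the first term equals $\frac{1}{\ceil{\wlen}}H_2(\mu_n/\alpha)$ while the second vanishes. For the entropy term, I would first argue that among all random processes supported on $\alphab{O}^m_l$ with the prescribed average consumption $\mu_n$, the maximizer --- or more precisely the specific process achieving the claimed value --- is the one in which the $m$ blocks $\hat{\xrv}^l$ are i.i.d.\ with $\mathbb{P}[\hat{\xrv}^l = (\alpha,\ldots,\alpha)] = \mu_n/\alpha$; this is forced once one computes that $\mu_n = \frac{1}{n}\sum_i \mathbb{E}[\hat{\xr}_i]$ equals $\alpha$ times the per-block probability of the all-$\alpha$ codeword. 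Then $\frac{1}{n}\ent{\randomvect{\hat{X}}^n} = \frac{m}{n}H_2(\mu_n/\alpha) = \frac{1}{l}H_2(\mu_n/\alpha)$, which is exactly the target since $l = \ceil{\wlen}$.

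For the equivocation term, I would reuse verbatim the disjointness argument from the proof of Theorem \ref{th:worst_case}: since $l = \ceil{\wlen} > \beta/\alpha$ (using $l \geq (\beta+1)/\alpha > \beta/\alpha$), the feasible output sets $\alphab{y}^n_\beta(s_0,\hat{\xv})$ for distinct input sequences $\hat{\xv} \in \alphab{O}^m_l$ are pairwise disjoint. The argument is the same telescoping computation: for $m=1$, comparing the battery states at time $l$ under input $(0,\ldots,0)$ versus $(\alpha,\ldots,\alpha)$ gives a difference of $l\alpha > \beta$, so both cannot lie in $\alphab{s} = \{0,\ldots,\beta\}$ simultaneously for the same $\yv$; the induction on $m$ goes through because the $m=1$ argument holds for every initial state. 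Consequently any observed $\yv$ determines $\hat{\xv}$ uniquely, so $\frac{1}{n}\ent{\randomvect{\hat{X}}^n \mid \yrv^n} = 0$, giving $\filrox = \frac{1}{\ceil{\wlen}}H_2(\mu_n/\alpha)$.

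The main obstacle --- and the only genuinely new ingredient compared with Theorem \ref{th:worst_case} --- is pinning down precisely which random process on $\alphab{O}^m_l$ with average $\mu_n$ is the intended $\randomvect{\hat{x}}^n$ and justifying that its entropy is $mH_2(\mu_n/\alpha)$; one must be careful that the statement asserts an equality, so the process is not merely \emph{any} process with that average but specifically the i.i.d.-blocks one (any other process with the same average would have strictly smaller entropy and hence a leakage rate strictly below the bound of Theorem \ref{th:avg_UP}, not equal to it). I expect the cleanest way is to define $\randomvect{\hat{x}}^n$ as exactly the i.i.d.-blocks process --- which is consistent with the hypothesis since it indeed has average $\mu_n$ --- and then the entropy computation is immediate. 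One should also check the edge cases $\mu_n = 0$ and $\mu_n = \alpha$ where $H_2$ vanishes and the feasible-set argument degenerates, but these are consistent with zero leakage. Everything else is a routine transcription of the earlier proof.
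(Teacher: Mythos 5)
Your proposal is correct and follows essentially the same route as the paper: the paper likewise computes the entropy term via the bound $\frac{1}{n}H(\hat{X}^n)\leq\frac{1}{l}H_2(\mu_n/\alpha)$ with equality for i.i.d.\ blocks, and kills the equivocation term by recalling the disjointness argument of Theorem 2 for $l>\beta/\alpha$. Your remark that the stated equality implicitly requires the process to be the i.i.d.-blocks one (rather than an arbitrary process on $\alphab{O}^m_l$ with average $\mu_n$) is a fair reading of the same caveat the paper only states in passing.
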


\begin{proof}
  Borrowing from (\ref{eq:ent_in_alph}) the entropy rate of the random process $\xrv^n$ taking values in $\alphab{o}^m_l$ is upper bounded by
  \begin{equation}
  \frac{1}{n} \ent{\xrv^n}\leq \frac{1}{l} H_2\left(\frac{\avg{\frac{1}{n}\sum_{i=0}^{n-1}\xr_i}}{\alpha} \right),
  \end{equation}
  with equality when the $\xrv^l$ symbols forming $\xrv^n$ are i.i.d.
  We now recall that when $\xrv^n$ takes values in $\alphab{o}^m_l$ with $l > \beta/\alpha$ the input $\xv^n$ can be uniquely determined from the output sequence $\yv^n$ and $\ent{\xrv^n|\yrv^n} = 0$. We conclude the proof by selecting $l = \ceil{\wlen}$.
  \end{proof}

\section{Conclusion}
We have studied the information leakage rate of EMSs with finite battery capacity for general random processes modelling the energy consumption of the user. Inspired by the results on permuting channels we have proposed a battery charging policy with bounded information leakage rate for arbitrary random processes. We have particularized the analysis to the case in which the average energy consumption of the user is known and we have concluded that extreme values of the average energy consumption provide lower values of information leakage to the utility provider.






%

\balance
\bibliographystyle{IEEEbib}
\bibliography{thesisbiblio}

\end{document}